\newcommand{\Pbb}{\ensuremath{\mathbb{P}}}
\newcommand{\PbbPI}{\ensuremath{\mathbb{P}^{\rm PI}}}
\newcommand{\FbbPI}{\ensuremath{\mathcal{F}^{\rm PI}}}
\newcommand{\Q}{\ensuremath{\mathbb{Q}}}
\newcommand{\E}{\ensuremath{\mathbb{E}}}
\newcommand{\CONSTANT}{{{\it{Constant}}}}
\newcommand{\rolls}{\ensuremath{{\rm {rolls}}}}
\newcommand{\nrolls}{{\ensuremath{n_\rolls}}}
\newcommand{\cav}{\ensuremath{{C^{\rm av}}}}
\newtheorem{myProp}{Proposition}
\newcommand\be{$$}
\newcommand\ee{$$}
\newcommand\ben{\begin{equation}}
\newcommand\een{\end{equation}}
\newcommand\bea{\begin{eqnarray*}}
\newcommand\eea{\end{eqnarray*}}
\newcommand\bean{\begin{eqnarray}}
\newcommand\eean{\end{eqnarray}}
\author{Chris Kenyon and Andrew Green\footnote{Contact: chris.kenyon@lloydsbanking.com}}
\title{Regulatory-Optimal Funding\footnote{\bf The views expressed are those of the authors only, no other representation should be attributed.}
}
\date{17 January 2014\\ \vskip5mm Version 1.20}
\begin{document}

\maketitle

\begin{abstract}
Funding is a cost to trading desks that they see as an exogenous input.  Current FVA-related literature reflects this by also taking funding costs as an exogenous input, usually constant, and always risk-neutral. 
However, this funding curve is the output from the point of view of the Treasury department of a bank.  The Treasury must consider Regulatory-required liquidity buffers, and both risk-neutral (\Q) and physical measures (\Pbb).  We describe the Treasury funding problem and optimize against both measures, using the Regulatory requirement as a constraint.  We develop theoretically optimal strategies for \Q\ and \Pbb\ measures, and then demonstrate a combined approach in four markets (USD, JPY, EUR, GBP).  Since we deal with physical measures we develop appropriate statistical tests, and demonstrate highly significant (p$<$0.00001), out-of-sample, improvements on hedged funding with a combined approach achieving 44\%\ to 71\%\ of a perfect information criterion.  Thus regulatory liquidity requirements change both the funding problem and funding costs.
\end{abstract}

\section{Introduction}

The costs to trading desks of holding positions typically include funding, and at least five banks\footnote{Barclays, Goldman Sachs, Lloyds Banking Group, Royal Bank of Scotland, JP Morgan (for the last one, see Financial Times 15 Jan 2014).} \cite{Cameron2013a} report funding value adjustments (FVA).  Current FVA-related literature treats funding costs as an input \cite{Burgard2011a,Burgard2012a,Morini2011a,Crepey2012a,Crepey2012b,Pallavicini2011b,Pallavicini2012a} that is usually constant (\cite{Pallavicini2013a} is an exception), and always risk-neutral.  Books specifically on liquidity \cite{Castagna2013a} or from a Treasury point of view \cite{Choudhry2012a}  do not treat funding costs as constant, but do not cover funding optimization mathematically.  We take the Treasury point of view for whom the funding curve is an output based on a funding strategy.  Treasury must consider Regulatory requirements \cite{FSA-P0916} (liquidity buffers) and both risk-neutral (\Q) and physical measures (\Pbb).  Precisely because Regulatory-required funding buffers exist, trading desks will always see funding costs that are not equal to the riskless rate when they have funding requirements.  This is true even if the bank can fund itself, unsecured, at the riskless rate.

The objective of this paper is to describe the Treasury funding problem with respect to Regulatory-required liquidity buffers and provide an optimal solution taking into account both the risk-neutral measure (\Q) and the physical measure (\Pbb).  Our contributions are: firstly, the description of the Regulatory-Optimal Funding problem; secondly, the provision of theoretically optimal funding strategies for \Q\ and \Pbb\  solving the Regulatory-Optimal Funding problem;  thirdly the development of practical optimization strategies involving both \Q\ and \Pbb; fourthly the calibration of the practical strategies and the demonstration of their effectiveness in four markets (USD, JPY, EUR, GBP), from 1995\footnote{From 2000 for EUR.} to 2012.  Since we deal with physical measures, a further contribution is the development of appropriate measures, and statistical tests, for testing strategy performance on out-of-sample data.  We demonstrate improvements on hedged funding with a combined approach achieving up to 71\%\ of a perfect information criterion (perfect information is when the future is known).

We detail our assumptions in the modeling section, however we highlight the most significant here.
\begin{itemize}
	\item Practically, we focus on short-term funding out to one year.  This is generally the area for money-market activity, rather than bond issuance and asset-liability management.  
	\item Theoretically our approach applies to any maturity but funding buffers are relatively less important for longer funding.  Despite this, the attractiveness of shorter --- and usually cheaper --- funding remains when yield curves are upwards-sloping.  Funding buffers are required by Regulators exactly because of this attractiveness to provide a degree of systematic resilience that might otherwise not be present.  Regulatory details are in the next section.  
	\item We assume that a risk-neutral measure (\Q) exists.  The significance of this assumption is that in the risk-neutral measure future expected yield curves can be derived from the present yield curve.  In contrast we define physical measures (\Pbb), and physical filtrations, as any other measures or filtrations that are not the risk neutral one.  The risk neutral measure (and filtration) is the one that can be derived from current market prices (see standard texts for details, e.g. \cite{Shreve2004a}).  A physical measure, for example involving the prediction of future expected yield curves, can be derived from any source but has the distinction that prices derived from it cannot be hedged.  A trader's view of the market is one example of a physical measure. This view may lead the trader to place open positions that will create a profit if the view turns out to be correct.  Section \ref{s:Pid} specifies the identification of the physical measure in this paper.
	\item We take Xibor as a proxy for funding costs simply as a place to start and because funding costs are often quoted as a spread over Xibor (the X stands for the relevant currency).  Xibor is an average of interbank lending rates, typically available up to 12 months or 18 months.  Because it is an average, individual contributing banks will actually experience a spread over or under Xibor for their lending with these horizons.   
	
	Xibor is an unsecured borrowing rate.  As such it already includes the possibility that the borrower may default, i.e. the credit risk.  We optimize the bank's funding strategy assuming that it is a going concern.  We do not consider the possibility that it may default on its liabilities as part of this strategy.  If we did, then --- conditional on survival --- the bank would leak PnL.  
	\item We assume no volume effects, i.e. no market impact of funding requests on funding costs.  Clearly there are volume effects, and these can be integrated into our framework, but we leave them for future research. 
	\item Our optimization involves expectation-based utility functions.  More complex utility functions could be used, for example involving VAR or Expected Shortfall \cite{BCBS-219}, we leave these extensions for future work. 
\end{itemize}

Bank funding overall is complex with multiple sources (deposits, debt markets, money markets, etc) and governed by a liquidity funding policy.  Any liquidity funding policy must be executed with respect to different sources.  In as much as markets are involved all liquidity funding policies are embodied by trading strategies.  The overall liquidity funding policy will involve asset-liability planning as well.  We focus on one area, funding optimization of the regulatory-required liquidity buffer.  All banks must solve this problem (provision and cost of the buffer), even if they do so sub-optimally.  Note that banks are not passive price-takers of funding costs over multi-year horizons.  Executives chose a risk-appetite which they must then execute on.  The success of this risk-appetite strategy will influence bank funding costs from the market.  Given our timescale (up to one year) we do not include these effects.  Furthermore bankwide funding cannot be considered in isolation from capital and credit, as the emergence of xVA desks \cite{Carver2013a,Carver2013b} demonstrates.

Banks direct internal funding using Funds Transfer Pricing (FTP) which can have local and global features.  For example a local feature could be relatively cheap/expensive funding at a given maturity to correct an observed Asset-Liability mismatch.  A global feature could involve relatively cheap/expensive funding for a particular business line.  However the explicit and  detailed control for business objectives is poorly developed \cite{fsa:dt}.  Despite this, differential funding for different desks is typically present, e.g. repo desks versus OTC derivative desks.

We start by recalling the discussion in \cite{Hull2011a} (rather than \cite{Hull2012a}) on hedging input costs which we apply to funding.  Let us assume that different banks have different (funding) costs and that Bank A hedges its (funding) costs.  Now suppose that the general market climate improves and the systematic part of funding costs decreases for all banks.  The Treasurer of Bank A must now explain to the CEO and the Board why Bank A is losing money relative to its competitors when offering similar prices for similar products.  Thus (funding) hedging decisions should always be at Board level.  Our approach avoids such issues by combining \Q\ and \Pbb.

\subsection{Regulation}

The FSA scenario \cite{FSA-P0916} (now owned by the PRA) includes two weeks of market closure followed by an individually agreed level of market access for the remainder of the three months (the Individual Liquidity Guidance, ILG).  Effective funding stress can be much larger than reductions in wholesale funding when deposit leakage is included, depending on the type of bank (retail, commercial, or investment).

Basel III \cite{BCBS-188,BCBS-189} by contrast bases its Liquidity Coverage Ratio (LCR) on total net cash outflows over 30 calendar days.  This is defined (paragraph 50) as outflows minus the minimum of inflows and 75\%\ of outflows.  The document specifies the runoff factor to be applied to each category of funding, e.g. "Unsecured wholesale funding provided by non-financial corporates and sovereigns, central banks and public sector entities: 75\%" and "Unsecured wholesale funding provided by other legal entity customers: 100\%" which includes non-central banks and financial firms.  Thus wholesale funding will vanish under the Basel III LCR calculation.

Prospective regulations on Prudent Valuation \cite{EBA-CP-2013-28}, which is clarified in \cite{EBA-CP-2013-28-FAQ}, propose that for capital purposes funding should be calculated maturity matched with trade payments:  
\begin{quote}
Art 10 ``... what adjustment must we make for the FVA'' reply: ``FVA should be assessed using the wholesale funding costs of the firm, maturity matched with the contractual trade payments.''
\end{quote}
We note that this does not mean the funding must be available --- only that for capital purposes pricing should include funding costs as though funding were available.  Whilst maturity matching payments may be straightforward for zero coupon bonds its direct application to derivatives is problematic.  An uncollateralized vanilla interest rate swap traded back-to-back with the street may require funding in the future or it may provide funding depending on market movements.  Hence we see that the precise description of future funding for derivatives can be characterized by four dimensions as a minimum: start date; tenor; rate; and volume.  We might say that this data prices volume-dependent funding swaptions.    Most funding literature, except \cite{Pallavicini2013a}, is based on a one-dimensional curve where start date is the present and only the tenor is present.  In any case pricing using this funding data hypercube is not practical because it cannot be calibrated.  This paper offers a pragmatic alternative approach.  

\subsection{Previous Work}

The key papers on funding consider either replication \cite{Burgard2011a,Burgard2012a} and / or (effectively) project finance \cite{Morini2011a,Hull2012a}.  In \cite{Burgard2011a,Burgard2012a} the authors assume that trading in own-bonds is possible and has no effect on the funding costs of the overall firm.  \cite{Hull2012a} assumes that firm funding costs exactly reflect the riskiness of the set of projects that the firm undertakes.    Thus marginal funding is the same as stand-alone funding.  In addition they assume that DVA can be divided into firm-default benefit and funding-default benefit, the latter of which exactly equals funding costs.  This contrasts with the hedging setup by the same author described in the introduction.  \cite{Morini2011a}  shows that this depends on not assuming that the firm survives.  \cite{Hull2013a} includes this point of view and comes to similar conclusions. \cite{Castagna2012a} also describe the conditions under which DVA can be replicated as well as interactions between DVA and FVA.  If the analysis is done conditioning on firm survival then funding-default benefit no longer equals funding costs because only avoided-funding cancels with funding costs.   Net funding at any point pays full funding costs.  It is this net funding that we address.

Books specifically on liquidity or from a Treasury point of view \cite{Castagna2013a,Choudhry2012a}  do not treat funding costs as constant.  However, they do not cover funding optimization mathematically as here.

Prediction of future market conditions \cite{Asness2013a,Geczy2013a}, and by yield curves \cite{Estrella2006a}, has a long history and a large literature based on time-series analysis \cite{Hamilton1994a}.  Our aim is not to contribute to this literature but to use the simplest possible approach (momentum, aka exponentially-weighted moving average or EWMA) to provide a \Pbb-linked strategy that is demonstrably better than hedging alone.

\section{Modeling}

We aim to minimize funding costs by choosing an optimal funding roll ($\alpha$).  The bank must always have a buffer of at least $\Delta$ days of funding.  Although our setup is general our focus is medium-term funding, i.e. out to one year, in four markets (USD, JPY, EUR, GBP).  We first analyze the characteristics of optimal \Q\ and optimal \Pbb\ strategies under simple assumptions, and then turn to the problem of \Pbb-measure discovery.  We then apply standard techniques to calibration and out-of-sample model analysis.

We assume that the bank can fund spot according to a known curve that we will call the yield curve.  In our examples we will use the Libor curve as a proxy for a typical bank as it contains credit and liquidity elements.  As soon as funding is shorter than the regulatory buffer, $\Delta$, it is worthless to the bank.  Thus we include a bid-ask spread $\varphi$ on the funding curve.  $\varphi$ gives the fraction of the funding cost recoverable by selling funding.  We will show that the bid-ask spread is an important driver of optimal funding roll characteristics.
\be
\varphi := L_{\rm bid}(*) / L_{\rm ask}(*) 
\ee
Where we use $*$ to indicate that the bid-ask spread is not tenor-dependent in our model.  $L(*)$ is the input funding curve.  We do not write ``bid'' and ``ask'' explicitly but use the combination of the sign of the flow and $\varphi$ to express the effects.  The model only uses the bid-ask spread at maturity $\Delta$ which is when held funding no longer counts towards the liquidity buffer.

Our base funding cost setup is \Q-funding, i.e. as assumption that the bank can fund in the future at rates predicted from the current yield curve.  However, we also assume that only spot funding is practical.  Thus in historical testing subsequent funding rolls must be carried out with the previously defined roll up to the horizon.  For \Pbb-funding spot funding also comes from the spot yield curve but future funding with predicted yield curves is included in the optimization.  Practically all funding strategies will re-optimize at every funding roll, we leave this for future research using multi-stage stochastic optimization \cite{Birge2011a}.  Here we explore the limits of myopic strategies, i.e. decide the funding roll at the start and then apply it up to the horizon.

The current work is to demonstrate that an assumption of hedged funding is sub-optimal in practice, and that using predicted funding curves is possible.  By possible we mean that out-of-sample statistical, and significant practical improvements, are observed.  We compare our optimal strategies using predictions with what we would do with perfect information.  This is possible from historical analysis.  Thus we report results in terms of average bps improvement over an attempted hedging strategy, and as a percentage achievement compared to perfect information.

\subsection{Setup}
Our basic setup is as follows (Figure \ref{f:figure_roll}).
\begin{itemize}
	\item Fixed quantity of funding up to horizon $h$.  
	\item Regulatory funding buffer requirement as a constraint that the bank must always have sufficient funding for $\Delta$ days.
	\item Input funding curve, i.e. where primary issuance can be sold for different maturities, $L(*)$.  (This may well be different from secondary trading prices.)
\end{itemize}
The existence of a funding buffer means that funding rolls must overlap.  Otherwise as we get towards the end of a funding roll, i.e. less than $\Delta$ days away, the regulatory constraint is broken. 

\begin{figure}[htbp]
	\centering
		\includegraphics[width=0.90\textwidth,clip,trim=5 25 5 25]{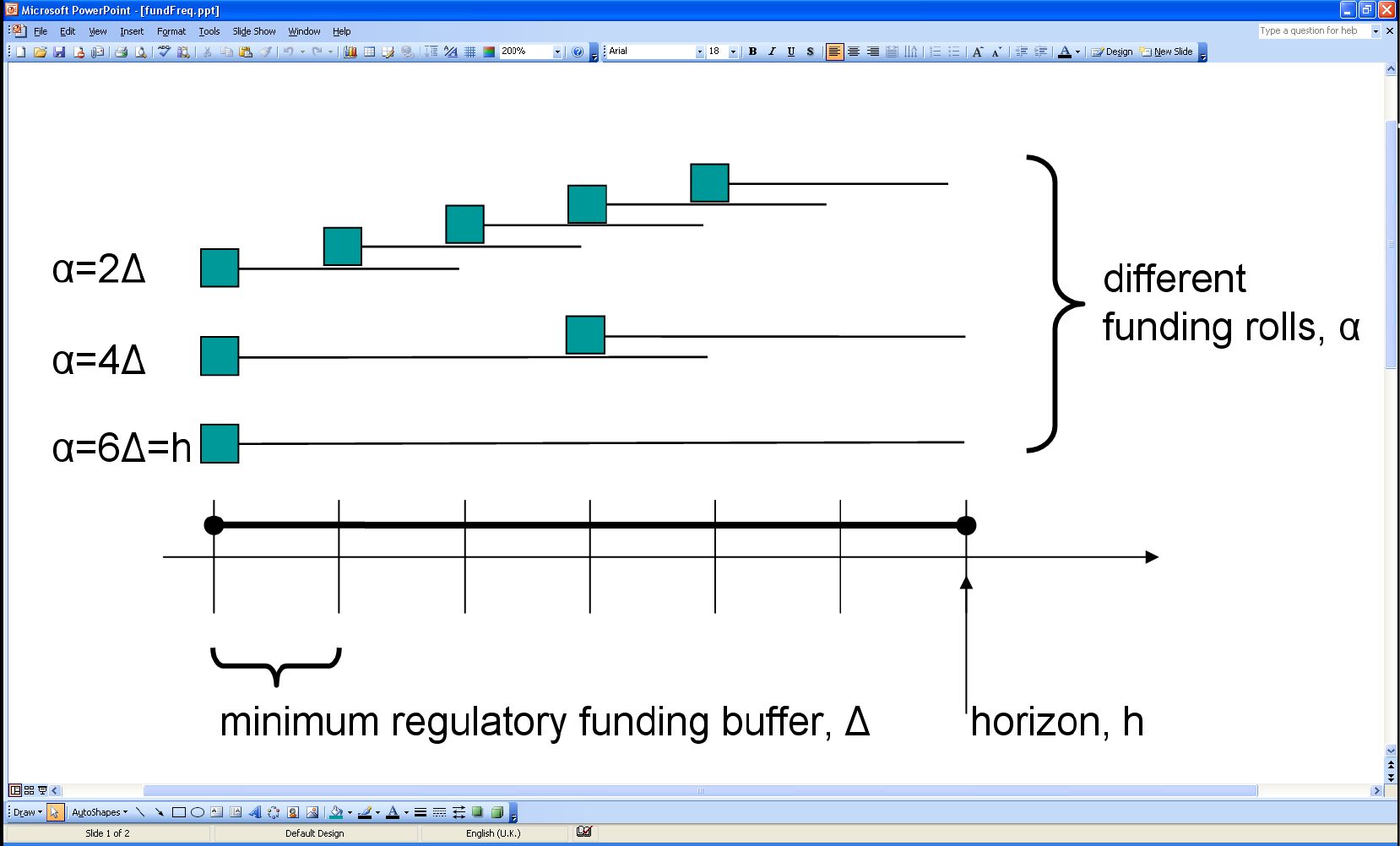}
	\caption{Basic funding setup showing different funding rolls $\alpha$ obeying the minimum funding buffer time $\Delta$, up to the funding horizon $h$.  Squares show start of each funding roll.}
	\label{f:figure_roll}
\end{figure}

The first point in the setup deals with the question of how to define the average funding cost.  We do the average for a fixed horizon.  There are many possible definitions addressing different combinations of getting an average cost over a period versus the repeating nature of the problem.  We have picked this one as corresponding to a planning horizon.  The actual horizon is important because we have an input  funding curve which may be upward sloping, flat, or possibly downward sloping, with variable curvature.

The maturity of the funding need sets a limit on the proportion of the time we must have a funding overlap.  For a very long maturity with term funding the buffer overlap will be a very small part of the total funding time.  This also illustrates that term funding is sub-optimal for short instruments assuming that the relevant desk has a stable business model.  If the desk has a highly volatile business model then it will suffer term funding with its attendant buffer costs.  Whilst the Treasury sees the funding requirements of the bank, not just each desk, there can still be volatility in funding needs in as much as desks (and business lines) have systematic dependencies.  

We consider a fixed funding horizon requirement $h$ and fixed roll length $\alpha$.  Any desk (or bank) will have a variety of funding maturities required at any time but we start from separate requirements.  Regulatory-optimal funding in the face of uncertain requirements is outside the scope of this paper and is an area for future research.

The overlap of roll periods highlights the significance of the bid-ask spread for funding.  Once we start the next roll, the previous funding with length $\le\Delta$ is of no use and can be sold.  The difference with the input funding curve may well be significant and we capture this with a parameter $\varphi$.  Generally $\varphi$ will depend on $\alpha$ to incorporate the market impact of selling but we use a constant for simplicity.  Note that whatever the funding roll the quantity of funding will always be constant.  This is because we assume that unnecessary funding will be sold immediately.  Any funding with a maturity less than the buffer period ($\Delta)$ does not count towards the buffer (by definition).  It does not matter what the bank does with these assets from the point of view of the Regulator, since they do not count for the liquidity buffer.  However, for the bank, it does matter in terms of value.  Since we assume that the interbank market pays Xibor we start from Xibor for how they can be invested

If the bid-ask spread is zero then it may appear that the optimal strategy, with an upwards-sloping yield curve, is to set $\alpha=\Delta+1$day.  Thus the bank would buy and sell funding every day.  Although this strategy appears to enable the bank to obtain funding at the overnight rate this is not optimal.  Consider the case where interest rates are going up faster than the yield curve.  Under these circumstances (lagging yield curve) it is optimal to get funding for longer (we are back to our hedging-or-not discussion).  Of course this assumes that the bank can detect such situations.  Generally the bid-ask spread will not be zero and we include it in our calculations.

\FloatBarrier
Working in units of $\Delta$ the number of times funding must be rolled to horizon $h_n$ with roll length $\alpha_n$, $n_\rolls(h_n,\alpha_n)$ is:
\ben
\nrolls(h_n,\alpha_n) = 
\begin{cases}
0 & \alpha_n \ge h_n \\
\left\lceil \frac{h_n-\alpha_n}{\alpha_n-1} \right\rceil & {\rm otherwise}\\
\end{cases}
\label{e:nroll}
\een
where: $\alpha_n=\alpha/\Delta$; $h_n = h/\Delta$.  Thus the gross excess funding  $G_{e\!f}$ will be, as a percentage:
\ben
G_{e\!f} = 100\times\frac{n_\rolls}{h_n}
\label{e:gross}
\een
Gross excess funding,  $G_{e\!f}$, is funding with maturity less than $\Delta$.  This funding does not count for the Regulatory buffer, hence it is excess funding.   We call it gross because it is simply a quantity not the cost of the quantity.

Figure \ref{f:figure_HR} shows Equations \ref{e:nroll}, \ref{e:gross} from the point of view of constant roll lengths (left panels) and from the point of view of constant horizons and choice of roll length (right panels).  All lengths are in units of $\Delta$.  The advantage of moving to longer rolls lengths is clear in terms of gross excess funding.  However, longer roll lengths mean more expensive funding when the input funding curve is upwards sloping, in the absence of other factors.  

\begin{figure}[htbp]
	\centering
		\includegraphics[width=0.90\textwidth,clip,trim=0 0 0 0]{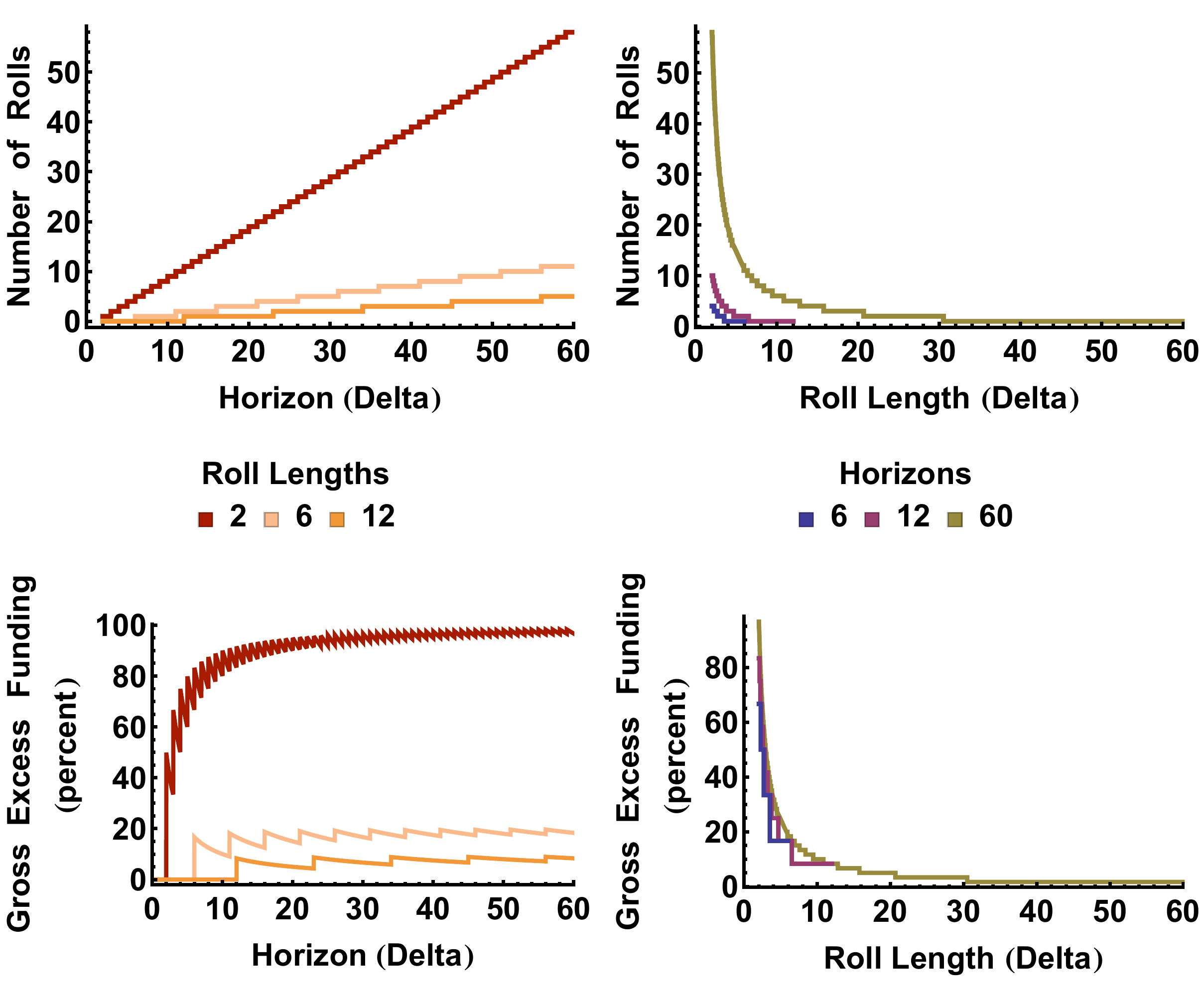}
	\caption{Equations \ref{e:nroll}, \ref{e:gross} from the point of view of constant roll lengths (left panels) and from the point of view of constant horizons and choice of roll length (right panels).  All lengths are in units of $\Delta$.}
	\label{f:figure_HR}
\end{figure}

\FloatBarrier
\subsection{Projected Costs and Optimization}

We use the expected undiscounted funding cost $\cav$ as our primary metric because this is often used in practice.  It is often quoted internally in banks in terms of the number of basis points above some reference curve (possibly an overnight curve, e.g. Fed Funds, Eonia, Sonia, etc).  This may be calculated under a physical (\Pbb) or risk-neutral (\Q) measure depending on the circumstances.  This mostly refers to whether funding costs are hedged or not.  As discussed above, we use Xibor is a proxy rate for unsecured borrowing. 

Considering Figure \ref{f:figure_roll} we may write $\cav$ as:
\begin{equation}
\begin{aligned}
\cav_t^{\ddag}&(\alpha,h)=\\
&\frac{1}{h}\E_t^\ddag\left[  
\min(\alpha,h) F_t(t,t+\min(\alpha,h) )     \vphantom{\sum_{i=1}^{\nrolls} }\right. \\
&\ \ \qquad{} +
\sum_{i=1}^{\nrolls} 
\left(  \vphantom{\sum}
-\varphi\Delta F_t(t+i(\alpha-\Delta),t+\Delta+i(\alpha-\Delta))   \right.  \label{e:cav}\\
&\ \ \quad\quad\qquad\qquad   {}+  \min(\alpha,h-i(\alpha-\Delta)) F_t(t+i(\alpha-\Delta),t+\min(i(\alpha-\Delta)+\alpha,h)
)
\left.\left.   \vphantom{\sum}\right)\vphantom{\sum_{i=1}^{\nrolls}}\right]
\end{aligned}
\phantom{\hspace{3cm}}
\end{equation} 
Where: 

$F_t(t_1,t_2)$ is the forward rate as seen from $t$ between $t_1$ and $t_2$; 

if $\nrolls<1$ then there are no terms in the summation;

$\ddag$ measure used in the expectation (\Pbb, \Q, or some combination).

Funding costs depend on the interpretation of $F_t(t_1,t_2)$, i.e. what is its actual realizable value, or what value can be locked in.  We work with continuously compounded rates for simplicity.

The myopic version of our optimization problem based on Equation \ref{e:cav} is:
\ben
\cav_{t,{\rm opt}}^{\ddag}(h) = \min_\alpha{\cav_t^{\ddag}(\alpha,h)}   \label{e:opt}
\een
We call this problem myopic as we are only permitted to chose $\alpha$ once, at the start.  This is not as great a limitation as it may appear when we consider that the funding we are optimizing is primarily short term, i.e. within O(10)$\Delta$.  In addition, practically, to avoid market impact banks typically stagger their funding so as to be in the market every day with some fraction.

In general Equation \ref{e:opt} is a non-linear, non-convex, optimization problem because of the minimum terms (otherwise for polynomial yield curves it would be an equivalent polynomial problem).  We assume that the yield curve is linear over the first year.  The linear yield curve assumption is to enable a basic theoretical understanding of the nature of the optimal funding problem.  For a linear model, the mean r-squared values for the four currencies over the period considered (weekly samples, Xibor maturity points, continuous compounding) are \{BP=0.66, EU=0.87, JP=0.76, US=0.80\}, and the geometric mean of the p-values\footnote{A geometric mean for p-values is appropriate because they cover a wide range, essentially log-scaling.} are \{BP=4.5e-6, EU=2.8e-7, JP=5.6e-8, US=1.1e-7\}, thus a linear model is supported by the data.  We are not making a formal hypothesis test here, because we know that reality is more detailed.  If we were making a hypothesis test then we would include Bonferroni (or similar) corrections for multiple tests and would phrase our observation as ``not rejected at a chosen p-level''.

If we assume a linear (continuously compounding) yield curve $y(T)$ with constants $a,b$:
\be
y(T) = a + b T
\ee
then, WLOG\footnote{WLOG = without loss of generality.} setting $t=0$ for simplicity, Equation \ref{e:cav} becomes:
\begin{equation}
\begin{aligned}
\cav_0&(\alpha,h)=\\
&\ \ \frac{1}{h}\left[  
(a+b\min(\alpha,h))\min(\alpha,h)      \vphantom{\sum_{i=1}^{\nrolls} }\right. \\
&\ \ \qquad{} +
\sum_{i=1}^{\nrolls} 
\left(  \vphantom{\sum}
-\varphi  \Delta[a + 2b i (\alpha-\Delta)+\Delta b  ]      \right.  \label{e:cavL}\\
&\ \ \quad\quad\qquad\qquad   {}+  q_i[a + 2b i (\alpha-\Delta)+q_i b  ] 
)
\left.\left.   \vphantom{\sum}\right)\vphantom{\sum_{i=1}^{\nrolls}}\right]
\end{aligned}
\phantom{\hspace{3cm}}
\end{equation} 
where:

$q_i$ is $\alpha$ except for $i=\nrolls$ when it equals $\min(\alpha, h - i (\alpha-\Delta))$

\subsubsection{Optimal \Q\ Funding}

\Q\ Funding is where we solve the optimization problem under the \Q\ measure.  This measure is determined by the observed prices of tradeable instruments at $t=0$ (see any text for details, e.g. \cite{Shreve2004a}).  For example in the \Q\ measure the current yield curve perfectly predicts the expected future yield curves.

It may be very difficult to lock in forward funding according to an observed input yield curve.  However, for completeness we include this case.  
\begin{myProp}
Given a linear input yield curve $(a>0,b)$ the regulatory-optimal \Q-funding strategies with horizon $h$ are:
\begin{itemize}
	\item $a>0, \varphi=1$: all funding strategies are equivalent;
	\item $a>0,\ a+bT>0, \varphi<1$: Term funding;
	\item $a>0,\ a+bT<0, \varphi<1$: Shortest possible.
\end{itemize}
\end{myProp}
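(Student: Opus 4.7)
The argument hinges on a telescoping identity special to the \Q\ measure. Under \Q\ continuously compounded forward rates satisfy
\begin{equation*}
(t_2-t_1)\,F_0(t_1,t_2)=t_2\,y(t_2)-t_1\,y(t_1),
\end{equation*}
so each buy in equation~(\ref{e:cav}) of length $q_i$ starting at $s_i:=i(\alpha-\Delta)$ rewrites as $y(e_i)e_i-y(s_i)s_i$ with $e_i:=s_i+q_i$, and each sell of length $\Delta$ rewrites as $\varphi\bigl[y(s_i+\Delta)(s_i+\Delta)-y(s_i)s_i\bigr]$. The geometric identity $e_{i-1}=s_i+\Delta$ (consecutive rolls overlap by exactly $\Delta$), together with $e_{\nrolls}=h$ and $s_0=0$, causes every interior boundary term among the buys to cancel when summed over $i=0,\ldots,\nrolls$, leaving
\begin{equation*}
\cav_0^{\Q}(\alpha,h) \;=\; y(h) \;+\; \frac{(1-\varphi)\Delta}{h}\sum_{i=1}^{\nrolls} F_0(s_i,\,s_i+\Delta),
\end{equation*}
with the convention that an empty sum ($\nrolls=0$) corresponds to $\alpha\ge h$, i.e.\ term funding.

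Substituting the linear curve $y(T)=a+bT$ gives $F_0(s_i,s_i+\Delta)=a+b(2s_i+\Delta)$, whose sign is that of $y$ evaluated at the midpoint of the overlap interval. The three cases of the proposition then follow immediately. If $\varphi=1$, the correction vanishes identically, so $\cav_0^{\Q}=y(h)$ for every admissible $\alpha$. If $\varphi<1$ and $a+bT>0$ throughout the horizon, every summand is strictly positive and the minimum is achieved by making the sum empty, i.e.\ by choosing $\alpha\ge h$. If $\varphi<1$ and $a+bT<0$ throughout the horizon, every summand is strictly negative; a direct arithmetic-progression computation gives $\sum_{i=1}^{\nrolls}F_0(s_i,s_i+\Delta)=\nrolls\bigl[a+b(s_{\nrolls}+\alpha)\bigr]$, and since the bracketed factor tends to $a+bh<0$ while $\nrolls\to\infty$ as $\alpha\to\Delta^{+}$, the correction diverges to $-\infty$, so the shortest admissible roll is optimal.

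\textbf{Main obstacle.} The only delicate step is the bookkeeping in the telescoping: I must handle asymmetrically the initial roll $i=0$ (which has no preceding sell) and the terminal roll $i=\nrolls$, whose length $q_{\nrolls}=h-s_{\nrolls}\le\alpha$ is in general not equal to $\alpha$. My plan is to tabulate the buy intervals $(s_i,e_i)$ explicitly, verify the overlap identity $e_{i-1}=s_i+\Delta$ directly from $s_i=i(\alpha-\Delta)$ and $q_i=\alpha$ for $i<\nrolls$, and then carry out the telescoping cleanly over $i=0,\ldots,\nrolls$. A minor secondary issue in case~(iii) is that $\cav_0^{\Q}$ is only piecewise smooth in $\alpha$ because $\nrolls(\alpha)$ is a step function; however the $-\infty$ limit is enough to pin the optimum at the shortest admissible $\alpha$ without invoking a global monotonicity argument.
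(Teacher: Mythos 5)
Your proof is correct and is essentially a rigorous elaboration of the paper's own (very terse) sketch: the telescoping identity reducing $\cav_0^{\Q}(\alpha,h)$ to $y(h)+\frac{(1-\varphi)\Delta}{h}\sum_{i=1}^{\nrolls}F_0(s_i,s_i+\Delta)$ is exactly the quantitative content of the paper's appeals to ``no-arbitrage'' ($\varphi=1$), ``rolls are costly'' (positive overlap forwards) and ``rolls become beneficial'' (negative overlap forwards), and your bookkeeping of the overlaps $e_{i-1}=s_i+\Delta$ and of the final roll is consistent with Equation \ref{e:cavL}. One small slip: $F_0(s_i,s_i+\Delta)=a+b(2s_i+\Delta)$ is $y$ evaluated at \emph{twice} the midpoint of the overlap interval, not at the midpoint; this is harmless for your argument (which uses the explicit formula, not the gloss), but it does mean that when $b<0$ the sign hypotheses $a+bT\gtrless 0$ must be read as holding for $T$ up to roughly $2h$ rather than only up to $h$ --- an imprecision the proposition's own statement shares.
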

\begin{proof}{\it Sketch.}
The first result is obvious by no-arbitrage: as there is no bid-ask spread no strategy can dominate.

When there is a bid ask spread and rolls are costly term funding dominates.

When there is a bid ask spread and observed negative zero rates then rolls become beneficial and the optimal strategy is to have as many rolls as possible.
\end{proof}

\begin{figure}[htbp]
	\centering
		\includegraphics[width=0.90\textwidth,clip,trim=0 0 0 0]{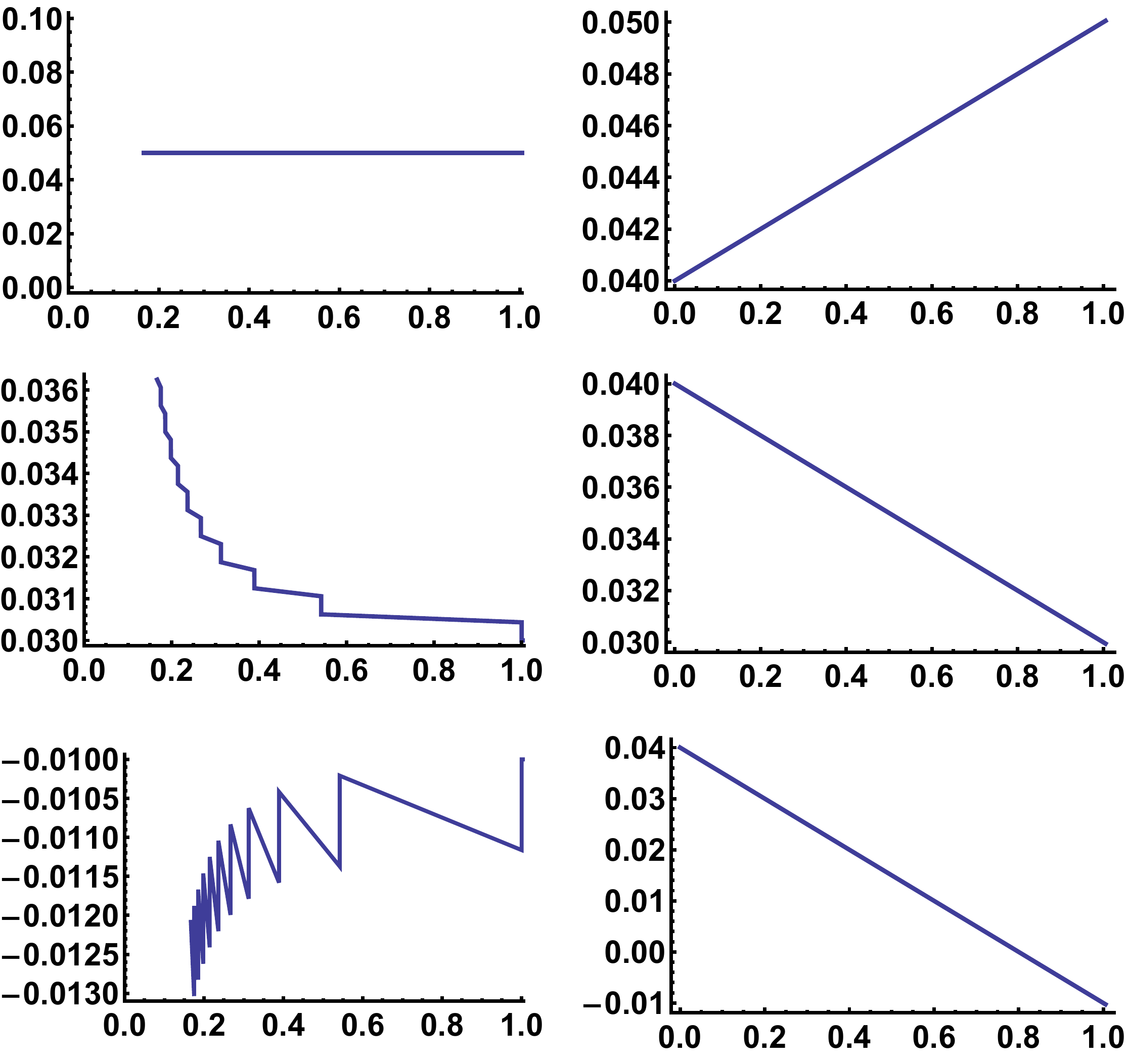}
	\caption{Left panels show funding costs versus roll lengths for: $a>0, \varphi=1$; $a>0,\ a+bT>0, \varphi<1$; $a>0,\ a+bT<0, \varphi<1$.  The right panels show the corresponding continuously compounded yield curves.  Parameters were: $h=1$; $\Delta=1/12$; with $\varphi=1$ for the top panel, and $\varphi=0.75$ for the lower two panels.  Axes for all graphs are the same: horizontal axes are roll length; vertical axes are funding costs.}
	\label{f:optQ}
\end{figure}

Figure \ref{f:optQ} illustrates the three cases. Note that although this funding is optimal in \Q\ terms there is no guarantee that following this strategy will result in the lowest costs in \Pbb\ terms.  This is precisely the hedging caveat of \cite{Hull2011a}.

\FloatBarrier
\subsubsection{Optimal \Pbb\ Funding}
 
\Pbb\ Funding is where we solve the optimization under \Pbb\ measures.  Potentially there will be a different solution for each \Pbb\ measure, although the problem is unchanged some of the inputs are different (i.e. measures and filtrations).

The \Pbb\ measure has no unique definition, unlike the \Q\ measure.  We will investigate several comparing the resulting strategies, and the \Q-optimal strategy with the \PbbPI-optimal strategy.  \FbbPI\ is the Perfect Information filtration and \PbbPI\ the corresponding (set of) measure(s) where we know the future (a rather simple set of measures in fact).  Thus we can calculate the value of perfect information (VPI) and state that the optimal strategy is the one with the lowest remaining VPI.  

The \Pbb\ situation is more complex than in the \Q\ case but there are some regularities that can be observed.  We start with the assumption of a \CONSTANT\ yield curve, i.e. future yield curves have exactly the same shape as today.
\begin{myProp}
Given a linear input yield curve $(a>0,b)$ the regulatory-optimal \Pbb-funding strategies where \Pbb=\CONSTANT\ yield curve, with horizon $h$ are:
\begin{itemize}
	\item $a>0, b>0, \varphi=1$: Shortest possible;
	\item $a>0, b>0, \varphi<1$: Neither the shortest nor Term funding are always optimal;
	\item $a>0,\ b<0$: Term funding;
\end{itemize}
\end{myProp}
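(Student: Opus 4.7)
Under $\Pbb = \CONSTANT$, every future yield curve has exactly the shape $y(T) = a + bT$, so at each future roll-time the tenor-$\tau$ zero rate equals $a+b\tau$, independent of the calendar time.  Replacing each forward $F_t(t_1,t_2)$ in equation (\ref{e:cav}) by $y(t_2-t_1)$, the expectation collapses to a deterministic sum in which each new roll of length $\alpha$ contributes $\alpha(a+b\alpha)$ and each bid-side sale contributes $-\varphi\Delta(a+b\Delta)$.  Grouping one sale together with its subsequent new roll gives a net incremental cost of $(\alpha-\Delta)(a+b(\alpha+\Delta)) + (1-\varphi)\Delta(a+b\Delta)$ per $(\alpha-\Delta)$ of new coverage, so up to the boundary correction from the (possibly truncated) last roll, the per-unit-time average rate is
$$R(\alpha) \;=\; a + b(\alpha+\Delta) \;+\; \frac{(1-\varphi)\Delta(a+b\Delta)}{\alpha-\Delta}.$$

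From here each case follows by direct inspection of $R$.  For Case 1 ($\varphi = 1$, $b > 0$), the penalty term vanishes and $R(\alpha) = a + b(\alpha+\Delta)$ is strictly increasing in $\alpha$, so the infimum is attained as $\alpha \downarrow \Delta$: shortest possible.  For Case 3 ($b < 0$), term funding achieves the average rate $a+bh$, while any genuine roll satisfies $\alpha + \Delta < h$ and therefore $R(\alpha) \ge a + b(\alpha+\Delta) > a + bh$, because $b<0$ makes the smaller argument dominate; the non-negative penalty only reinforces the conclusion that term funding is strictly optimal.

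Case 2 ($b > 0$, $\varphi < 1$) is the substantive one.  The linear summand in $R$ is strictly increasing while the penalty term is strictly decreasing and blows up at $\alpha = \Delta$, so $R$ is strictly convex on $(\Delta,\infty)$ with a unique interior minimizer $\alpha^{*} = \Delta + \sqrt{(1-\varphi)\Delta(a+b\Delta)/b}$ giving $R(\alpha^{*}) = a + 2b\Delta + 2\sqrt{b(1-\varphi)\Delta(a+b\Delta)}$, which is independent of $h$.  Since $R(\Delta^{+}) = +\infty$, the shortest-possible strategy is immediately excluded.  To exclude term funding as the universal optimum it suffices to note that $R(\alpha^{*})$ is a fixed constant whereas $a + bh \to \infty$ as $h$ grows, so for any sufficiently large $h$ the interior choice $\alpha = \alpha^{*}$ strictly beats term funding.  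Conversely, for $h \le \alpha^{*}$ the admissible range of $R$ is monotone decreasing, so term funding is optimal in that regime, ruling out shortest from the opposite direction and confirming that neither endpoint is universally best.

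The principal technical nuisance is the truncated final roll: $\nrolls$ is an integer ceiling, so $\cav(\alpha,h)$ picks up sawtooth discontinuities in $\alpha$ and only equals $(\alpha/h)\,y(\alpha) + (1-\alpha/h)\,R(\alpha)$ up to an $O(1/h)$ correction from the short last piece.  The plan is to first prove each case along the countable set of $\alpha$ for which $(h-\alpha)/(\alpha-\Delta)$ is a non-negative integer --- so that no truncation occurs and the average is an exact affine combination of the initial cost and $R(\alpha)$ --- and then to argue that for $h/\Delta$ not too small this boundary term cannot overturn the strict monotonicity in Case 1 or the strict convexity and interior-minimum structure in Case 2.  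Case 3 is immune since its optimum has no rolls at all.
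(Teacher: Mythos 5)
Your argument is correct in substance and is considerably more explicit than the paper's own proof, which is only a qualitative three-line sketch (zero spread plus upward slope makes rolls free so the shortest tenor wins; a positive spread creates a trade-off so the optimum can move; a downward slope makes every roll costly so term funding wins). What you add is the quantitative backbone: under \CONSTANT\ each forward collapses to $y(t_2-t_1)$, the identity $\alpha(a+b\alpha)-\varphi\Delta(a+b\Delta)=(\alpha-\Delta)\bigl(a+b(\alpha+\Delta)\bigr)+(1-\varphi)\Delta(a+b\Delta)$ yields the per-unit-coverage rate $R(\alpha)$, its blow-up as $\alpha\downarrow\Delta$, and a closed-form interior minimizer whose value is independent of $h$. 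This turns the second bullet --- which the paper disposes of with ``clearly the optimal strategy can move'' --- into a checkable statement, and it also surfaces the assumption the paper leaves implicit, namely $a+b\Delta\ge 0$ so that the sale penalty is non-negative. Your handling of the ceiling in $\nrolls$ (prove on the lattice of untruncated $\alpha$, then control the boundary term) is a reasonable plan, though it is left as a plan rather than executed.

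One step needs repair. In Case 3 you assert that any genuine roll satisfies $\alpha+\Delta<h$ and conclude $R(\alpha)\ge a+b(\alpha+\Delta)>a+bh$. The first inequality's premise fails when $\alpha$ is close to $h$ (or whenever $h$ is only a few multiples of $\Delta$): a single-roll strategy can have $\alpha+\Delta>h$, and with $\varphi=1$ one then gets $R(\alpha)<a+bh$. The conclusion survives, but you must use the full average $\tfrac{\alpha}{h}(a+b\alpha)+\tfrac{h-\alpha}{h}R(\alpha)$, in which the initial leg is priced at $a+b\alpha>a+bh$ and carries enough weight to dominate. Concretely, in the untruncated case the rolling cost minus the term cost equals $\nrolls\bigl[(1-\varphi)\Delta(a+b\Delta)-b(\alpha-\Delta)(h-\Delta)\bigr]$, which is strictly positive for $b<0$ (and, reassuringly, strictly negative for $b>0$, $\varphi=1$, recovering Case 1). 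Your closing remark that Case 3 is ``immune'' to the truncation issue is not quite right: term funding itself involves no truncation, but the competitors it must beat do, so the corner $\alpha\in(h-\Delta,h)$ still has to be checked explicitly.
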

\begin{proof}
{\it Sketch.}  
Given that rolls do not matter as the bid-ask spread is zero we can choose any roll.  As the future funding cost is the same as today, then with an upward-sloping yield curve the lowest cost is from the shortest roll.

When rolls are costly there is a playoff between moving up the yield curve for longer funding and fewer rolls against the cost of doing so.  Clearly the optimal strategy can move.

When the yield curve is downwards-sloping, and since $a>0$ and is always greater than zero rolls are always costly, hence the cheapest strategy is to have no rolls.  That is, term funding is optimal.
\end{proof}

\begin{figure}[htbp]
	\centering
		\includegraphics[width=0.90\textwidth,clip,trim=0 0 0 0]{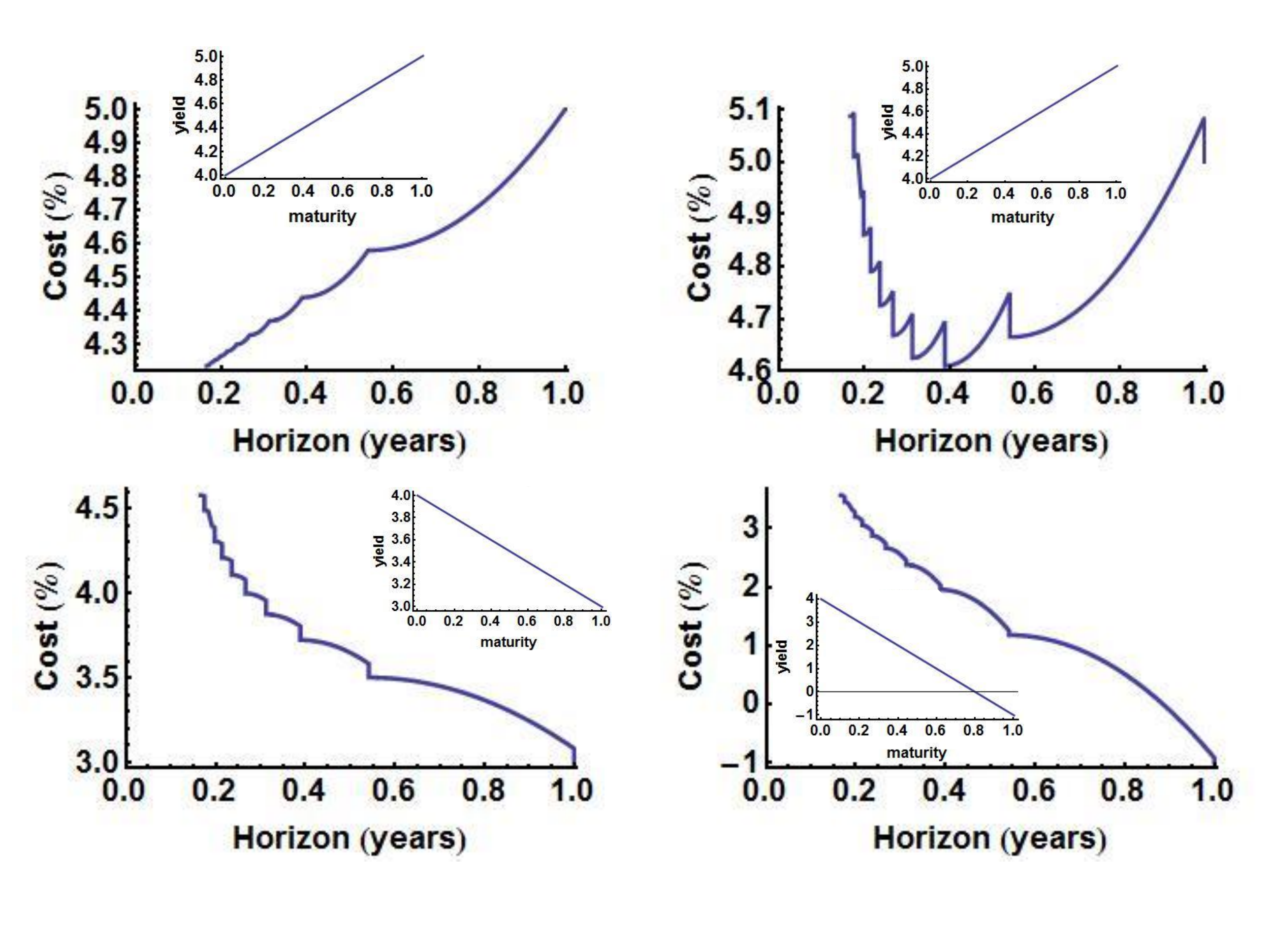}
	\caption{\Pbb\ Funding costs versus roll lengths for three of the same cases as Figure \ref{f:optQ}, with the addition of the upper right which has: $a>0, b>0, \varphi=0.75$.  Equivalence of panels is clockwise from top right.  Axes for all graphs are the same: horizontal axes are roll length; vertical axes are funding costs.}
	\label{f:optPconstant}
\end{figure}

Figure \ref{f:optPconstant} illustrates four cases, three are the same as Figure \ref{f:optQ} with the addition of the $a>0, b>0, \varphi<1$ case.  In this extra case the optimal strategy is intermediate between shortest possible and Term funding.  

Strikingly, the \Q-optimal and \Pbb-\CONSTANT-optimal strategies are almost opposite. The most realistic case is perhaps where there is a bid-ask spread and then the \Pbb-\CONSTANT-optimal strategy is intermediate, which is not even present within the \Q-optimal strategies.  

The \CONSTANT\ case is not as restrictive as it may appear.  What we are effectively doing is comparing the movement of the baseline with the differential of the yield curve.  In the second case the yield curve is steeper than the baseline movement, whilst in the third the baseline movement is steeper than the yield curve.  Thus the key to practical strategies is short rate momentum versus yield curve steepness.

However, we have not yet examined how these optimal strategies would hold up practically.  For this we turn to the Value of Perfect Information.

\FloatBarrier
\subsection{\Pbb\ Identification and Perfect Information Benchmark\label{s:Pid}}

We can only identify \Pbb\ on a statistical basis.  We want to obtain lower funding costs that are materially lower than \Q-funding on average and that are statistically significant out-of-sample.  The best we can do is if we know the future and this is our perfect information benchmark (often referred to as Expected Value of Perfect Information, or EVPI.  EVPI is a standard metric from the stochastic programing literature \cite{Birge2011a} and originally in \cite{Raiffa1961a}.  EVPI is the payment which makes a decision maker indifferent between receiving the payment and being given complete and accurate knowledge of the future (with respect to the decision).
\begin{figure}[htbp]
	\centering
		\includegraphics[width=1.00\textwidth]{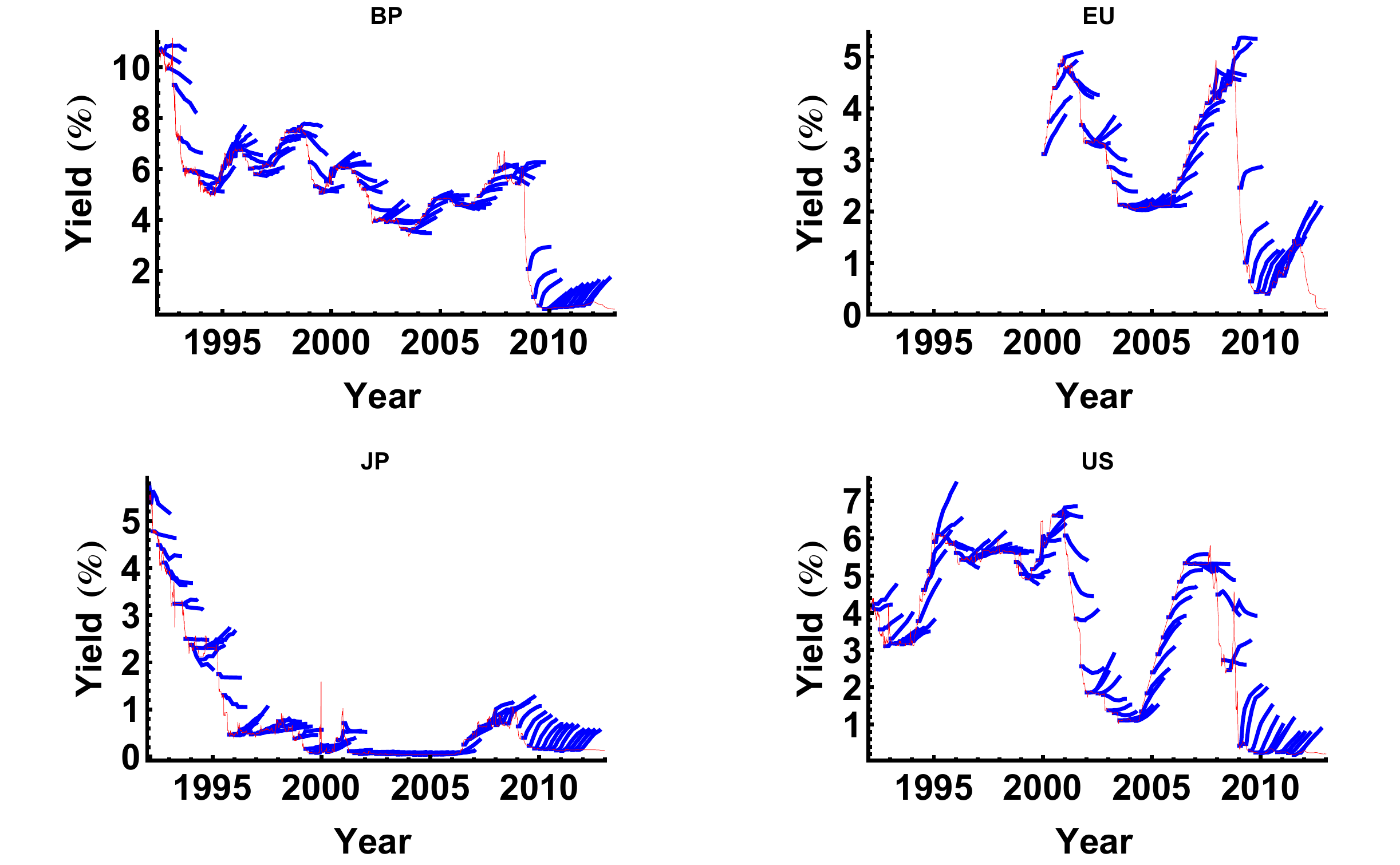}
	\caption{Libor curves (short blue lines; continuously compounding) for four major currencies compared to 1M Libor rate (continuous line; continuously compounding).}
	\label{f:4whisk}
\end{figure}
Figure \ref{f:4whisk} shows examples of funding (i.e. Libor) curves out to 1Y used in this example for the four major currencies considered (GBP, EUR, JPY, USD). 

In our \Pbb\ identification we assume that the shape of the yield curve does not change but that it undergoes parallel shifts.  We aim to predict these parallel shifts using a momentum strategy, specifically an exponentially weighted moving average (EWMA) filter (i.e. a trivial Kalman filter).  We add some refinements:
\begin{itemize}
	\item Limit gradient of short end movement so that projected base rates do not go negative
	\item Apply a threshold $\theta$ to the calculated gradient 
	\item Weight the gradient  by $\omega$, to between 0\%\ and 100\%\ of its calibrated value
	\item The projected short end of the curve cannot go negative
\end{itemize}
Thus we have three parameters to calibrate: $\{\lambda, \theta, \omega\}$.  $\lambda$ is the decay constant for the EWMA filter.  

\subsection{Results}

Calibration solves the problem below:
\ben
\max_{\Lambda\in \Pi} \sum_{c\in\{\text{BP,EU,JP,US}\}}\left(  \cav_{t,{\rm opt}}^{\Q}(h;c) -\cav_{t,{\rm opt}}^{\Pbb}(h;c,\Lambda) \right) /4\label{e:cal}
\een
where the ``/4'' is because we average over four currencies $c$, which are now visible as parameters of the costs $C$. Also:

$\Lambda = \{\lambda,\theta,\omega\}$ is the set of parameters. $\lambda \in[0,10\ \text{years}]$ is the EWMA decay constant (units of time); $\theta\in[0,1]$ is the gradient threshold (no units); and $\omega\in[0,1]$ is the gradient scaling (no units).  Parameter ranges over which to optimize are reasonable choices in the authors' opinions.

$c$ is a currency in the set of currencies \{\text{BP,EU,JP,US}\}.

$\Pi$ is the multi-dimensional space of feasible parameter values.

$h$ is the horizon, here one year.

Equation \ref{e:cal} is a maximization over a nested optimization, see Equations \ref{e:opt}, \ref{e:cav} for details of $\cav$.  In the inner optimization $\alpha$ is chosen.  In the outer optimization (Equation \ref{e:cal}) the parameters are chosen.  Note that \Q\ and \Pbb\ replace \ddag\ in Equations \ref{e:opt} and \ref{e:cav} as we are solving for the parameters which improve the \Pbb\ solution the most relative to the \Q\ solution.  We have also made the parameters in the \Pbb\ optimization explicit, ``$;\Lambda$''.  It is also clear that to understand whether there is a real improvement, or not, we must test the resulting calibration out of sample.

We calibrated model parameters using the first five years of data by maximizing the average funding cost improvement relative to a \Q-optimal strategy (i.e. Equation \ref{e:cal}), see Table \ref{t:calib}.  The calibration was done jointly over all the currencies because the main effects are due to economic cycles.   The model was then tested out-of-sample on the remaining data by comparing with the \Q-optimal strategy and the EVPI strategy.

\begin{table}[htbp]
	\centering
		\begin{tabular}{lcc}
		 & Parameter & Value \\ \hline
	\multirow{3}{*}{Setup}	& Horizon & 1Y \\
		&Minimum regulatory buffer & 1M \\
		&Bid-ask ($\varphi$) & 0.75 \\ \hline
	 \multirow{4}{*}{Calibration} 
	 & Calibration length & 5Y \\
	 & EWMA decay ($\lambda$) & 90D \\ 
	 & Gradient threshold ($\theta$) & 0.005 per day \\
	 & Gradient scaling ($\omega$) & 0.3 \\ \hline
		\end{tabular}
	\caption{Setup, and calibration for EWMA and additional parameters.  Calibration was combined over all the currencies.}
	\label{t:calib}
\end{table}

\begin{figure}[htbp]
	\centering
		\includegraphics[width=0.90\textwidth]{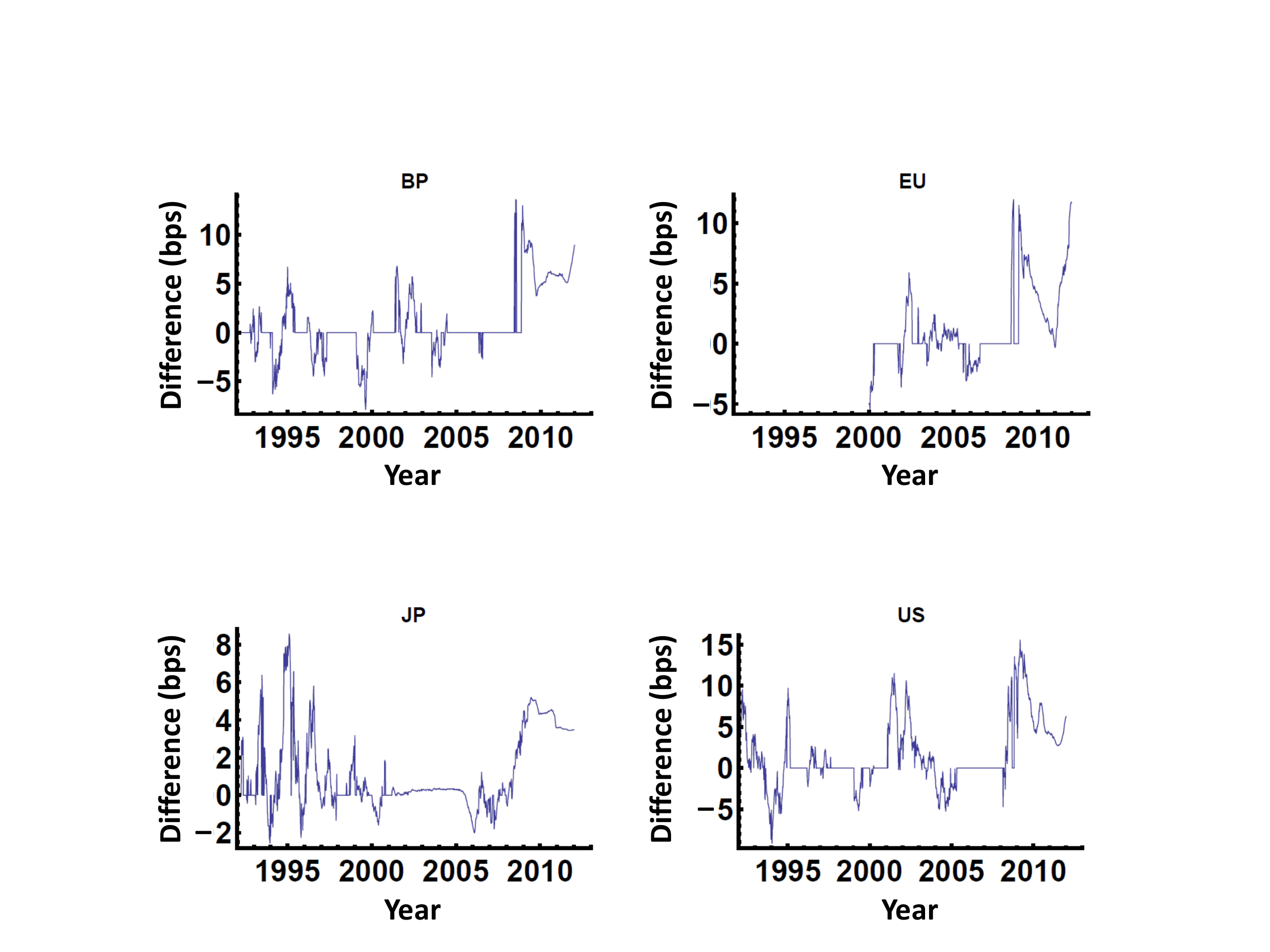}
	\caption{\Q-funding average undiscounted cost relative to EWMA-based funding cost (0.01 on vertical scale = 100bps).}
	\label{f:QvsEWMA}
\end{figure}
\begin{figure}[htbp]
	\centering
		\includegraphics[width=0.90\textwidth]{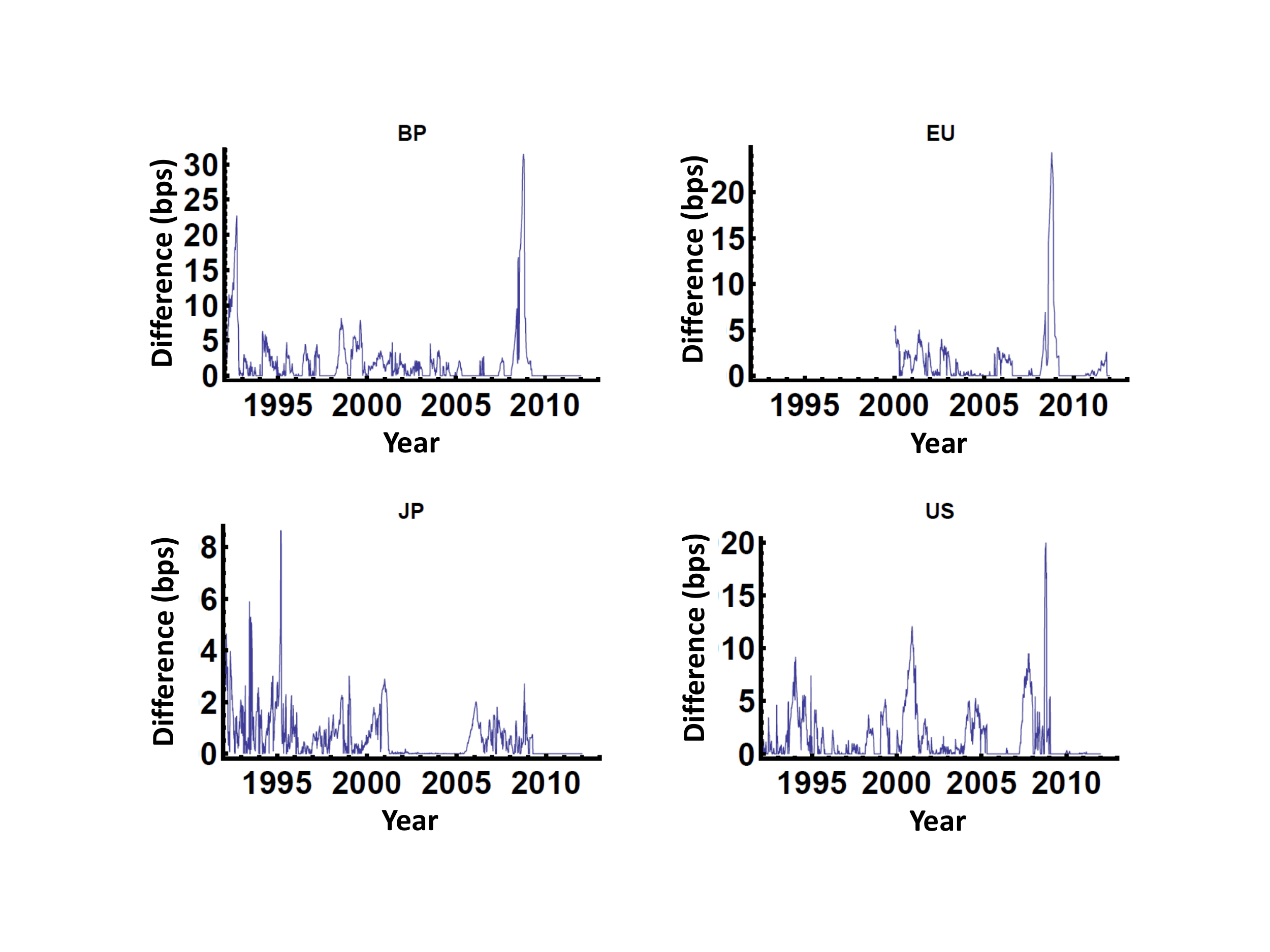}
	\caption{EWMA-based average undiscounted cost relative to perfect information funding cost (0.01 on vertical scale = 100bps).}
	\label{f:EWMAvsVPI}
\end{figure}
\begin{figure}[htbp]
	\centering
		\includegraphics[width=0.90\textwidth]{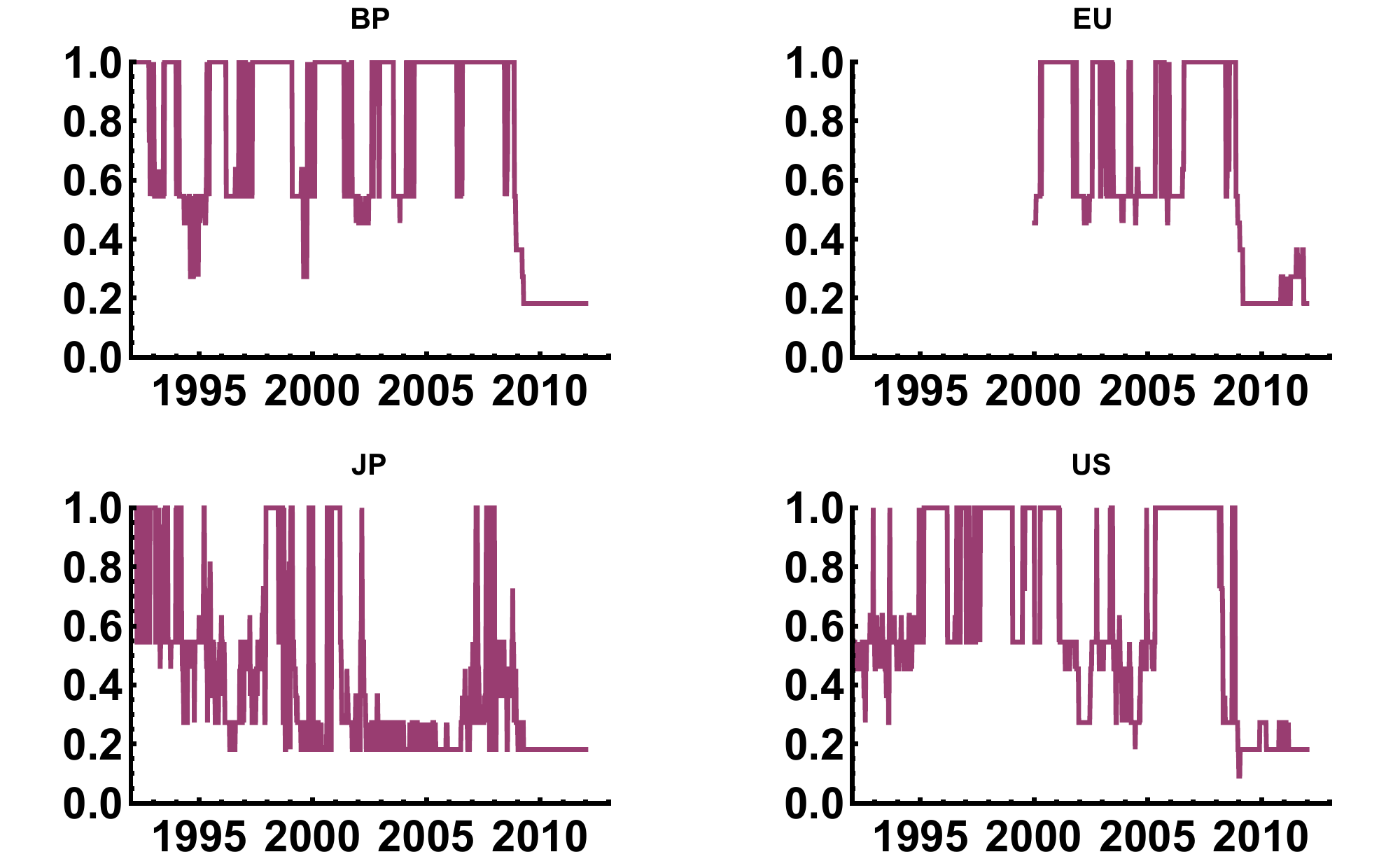}
	\caption{Optimal EWMA-based roll choice (years, on vertical scale). \Q-based roll choice is always term funding, i.e. here 1Y.}
	\label{f:EWMAroll}
\end{figure}

\begin{table}[htbp]
	\centering
		\begin{tabular}{ccccc}
		 & BP & EU & JP & US \\ \hline
\Q\ vs EWMA (average, bps) & 13 & 22 & 10 & 19 \\
T-Test p-level& E-25 & E-27 & E-45 & E-37 \\
EWMA vs EVPI (average, bps) & 16 & 15 & 4 & 15 \\ \hline
\%-efficient & 44\% & 59\% & 71\% & 56\%
		\end{tabular}
		\caption{Out-of-sample results for different funding strategies and their statistical significance (p-value).  Efficiency is defined as (\Q\ - EWMA) / (\Q\ - EVPI).}
	\label{t:res}
\end{table}

Figure \ref{f:QvsEWMA} shows optimal \Q\ funding costs versus our EWMA model (\Pbb\ optimal funding).  We see that following an optimal \Q\ funding strategy is always more expensive (positive cost differences) since 2008 for all currencies.  Prior to 2008 the picture is more mixed but there is a preponderance of positive costs for EU,JP,US whereas for BP the picture pre-2008 seems balanced.

Figure \ref{f:EWMAvsVPI} shows our EWMA model funding (\Pbb\ optimal funding) costs versus funding when the future is known (i.e. Value of Perfect Information).  For BP,EU,US our strategy is usually bounded by 0.005 (i.e. 50bps) and for JP it is 0.002 (or 20bps).  That is we are within 50bps (or 20bps for JP) of a perfect performance.  Figure \ref{f:QvsEWMA} by contrast shows that \Q\ optimal funding can be multiples of this bound worse than our strategy.  There are a few short spikes where a change of market conditions has not been picked up by our strategy quickly enough.  One or two spikes, per currency, over 20 years is an encouraging performance.

Figure \ref{f:EWMAroll} shows the EWMA model (\Pbb\ optimal funding) roll choices.  Generally the strategy is bang-bang\footnote{``bang-bang'' is a standard term in optimal control theory \cite{Craven1998a}.}, with dis-continuous changes in the control value (here the roll length).  It is noticeable that roll choices are even shorter when the model detects a significant mismatch between market-observed yield curves and market behavior.  This occurs when rates are low and the yield curve is upwards sloping or flat.  Recall that \Q\ always chooses term funding when there is a bid-ask spread, as here.  

In all Figures \ref{f:QvsEWMA}, \ref{f:EWMAvsVPI}, \ref{f:EWMAroll}, the first five years are calibration and the rest are performance.

The relative funding cost figures show that the major information missing is the exact timing of the start of the late-2008 drop in rates for GBP, EUR and USD.  For JPY the drop started gradually so the EWMA-based setup adapted smoothly.  Notice that the funding cost increase is a relatively sharp peak so we see that the EWMA adapted quickly to the new setup.  An additional point is that there is no equivalent cost peak relative to perfect information when rates approached zero.  This is easy to explain as we built-in knowledge that the short end of the funding (yield) curve could not go past zero.

Table \ref{t:res} summarizes the out-of-sample results.  We see that EWMA-based optimal funding is between 10bps and 22bps better on average relative to \Q-optimal funding.  However, EWMA-based funding still lacks 4bps to 16bps relative to perfect information.  We see that EWMA-based funding, for the example setup, achieves 44\%\ to 71\%\ efficiency.  Effeiciency is defined as achieved average improvement relative to \Q\ funding relative to perfect information.  Statistically these results are highly significant with p-values of at least 10E-20.

\FloatBarrier
\section{Discussion}

On a derivatives desk the question of funding costs is normally answered by "ask Treasury".  Here we take the view from Treasury and integrate regulatory constraints into an optimal funding problem.  We have shown theoretically that \Q-optimal and \Pbb-optimal strategies are radically different.  Practical funding optimization relies on identification of an appropriate \Pbb-measure on statistical grounds.  We have shown that this is possible and that quite simple strategies based on EWMA (i.e. momentum) are statistical, and practical, improvements on hedged funding.  These strategies achieve 44\%\ to 71\%\ efficency when compared to perfect information.

Prospective Prudent Valuation regulations \cite{EBA-CP-2013-28,EBA-CP-2013-28-FAQ} may see the funding strategies proposed here as business models.  Thus they would not be applicable for use in pricing trades for capital purposes.  

We have limited ourselves to a one-year funding horizon because that is how far out deposits are available.  We used these as proxy funding costs for a typical bank.  Future work could go out further using bond curves or CDS.  There are issues with both data sources. Bond prices are generally only visible in the secondary market which is not the one that the issuing bank has to optimize against.  CDS are unfunded instruments, have only been available for ten to fifteen years, they can have significant bond CDS bases.  These are outside the scope of this paper.

We consider funding costs in terms of expectations, i.e. very simple utility functions.  More complex utility functions could be introduced considering VAR or Expected Shortfall \cite{BCBS-219}.  However, these may not be appropriate because the buffer itself acts against downside risk so introducing further downside risk measures appears superfluous.  Other more complex utility functions would require justification.  Utility functions are equivalent to strategy choices, they cannot be used to compare different choices.  At executive level the utility function should be compared with the risk appetite of the bank.

Our yield curve prediction based on EWMA and our myopic optimization can be improved in many ways.  A Kalman filter, or machine learning approach could be applied to prediction.  The optimization could be moved from myopic to multi-stage stochastic optimization.  However, even at this early stage we achieve significant practical improvements, and these are at low computational cost.  We leave these developments for future research.

It may appear that our \Q-funding setup is unfair in that hedged funding is assumed to be possible in the optimization, but must then play out in the physical measure for its actual cost.  This is not a difficulty because (myopic) optimal \Q-funding is always term funding (when the yield curve is positive, as it is in practice).  Thus the anticipated funding cost is achieved.

Practically one may argue that taking downwards-sloping yield curves for funding curves is infeasible because funding at longer tenors may be volume-limited.  We leave consideration of market impact for future research but note that downwards-sloping curves are rare so we do not expect them to change our statistical conclusions.

This paper highlights the importance of regulatory liquidity constraints on funding, specifically the minimum liquidity buffer.  It also sets up the funding cost problem from the Treasury point of view and shows how it can be optimized using \Q and \Pbb\ points of view.  Demonstration results are significant both statistically and practically.

\bibliographystyle{alpha}
\bibliography{kenyon_general}
\end{document}